\pgfplotsset{/pgf/number format/use comma,compat=newest}
\newcounter{instr}
\newcommand{\ninstr}{\refstepcounter{instr}\theinstr.}
\newcommand{\bigO}{\mathcal{O}}
\newcommand{\best}[1]{\underline{#1}}
\newcommand{\eq}{\approx}
\newcommand{\nr}{\textsc{nr}\xspace}
\newcommand{\no}{\textsc{no}\xspace}
\newcommand{\gte}{\beta\xspace}
\newcommand*{\qeda}{\hfill\ensuremath{\blacksquare}}
\title{Efficient Algorithms for the\\Order Preserving Pattern Matching Problem\thanks{This work has been supported by the Scientific \& Technological Research Council Of Turkey (TUBITAK), the Department Of Science Fellowships \& Grant Programs (BIDEB), 2221 Fellowship Program, and by G.N.C.S., Istituto Nazionale di Alta Matematica ``Francesco Severi''.}}
\author{Simone Faro$^\dag$ \and M. O\u{g}uzhan K\"{u}lekci$^\ddag$}
\institute{
$^\dag$Universit\`a di Catania, Department of Mathematics and Computer Science, Italy\\
$^\ddag$Istanbul Medipol University, Department of Biomedical Engineering, Turkey\\
\email{faro@dmi.unict.it, okulekci@medipol.edu.tr
}
}
\begin{document}

\maketitle

\begin{abstract}
Given a pattern $x$ of length $m$ and a text $y$ of length $n$, both over an ordered alphabet, the \emph{order-preserving pattern matching} problem consists in finding all substrings of the text with the same relative order as the pattern.  It is an approximate variant of the well known \emph{exact pattern matching} problem which has gained attention in recent years. This interesting problem finds applications in a lot of fields as time series analysis, like share prices on stock markets, weather data analysis or to musical melody matching.
In this paper we present two new filtering approaches which turn out to be much more effective in practice than the previously presented methods. From our experimental results it turns out that our proposed solutions are up to 2 times faster than the previous solutions reducing the number of false positives up to 99\%.
\end{abstract}

\section{Introduction} \label{sec:introduction}
Given a pattern $x$ of length $m$ and a text $y$ of length $n$, both over a common alphabet $\Sigma$, the \emph{exact string matching problem}  consists in finding all occurrences of the string $x$ in $y$. String matching is a very important subject in the wider domain of text processing and algorithms for the problem are also basic components used in the implementations of practical softwares existing under most operating systems. Moreover, they emphasize programming methods that serve as paradigms in other fields of computer science. Finally they also play an important role in theoretical computer science by providing challenging problems.
The worst case lower bound of the string matching problem is $\bigO(n)$ and was achieved the first time by the well known algorithm by Knuth, Morris and Pratt~\cite{KMP77}. However many string matching algorithms have been also developed to obtain sublinear $\bigO(n\log m/ m)$ performance on average. Among them the Boyer-Moore algorithm~\cite{BM77} deserves a special mention, since it has been particularly successful and has inspired much work.

The \emph{order-preserving pattern matching problem} [2, 3, 8, 9]  (OPPM in short) is an approximate variant of the exact pattern matching problem which has gained attention in recent years. 
In this variant the characters of $x$ and $y$ are drawn from an ordered alphabet $\Sigma$ with a total order relation defined on it.  The task of the problem is to find all substrings of the text with the same relative order as the pattern. 

For instance the relative order of the sequence $x=\langle 6,5,8,4,7 \rangle$ is the sequence $\langle 3,1,0,4,2 \rangle$ since $6$ has rank $3$, $5$ as rank $1$, and so on.
Thus $x$ occurs in the string $y = \langle 8, 11,10,16,15,20,13,17,14,18,20,18,25,17,20,25,26\rangle$ at position $3$, since $x$ and the subsequence $\langle 16,15,20,13,17 \rangle$ share the same relative order.
An other occurrence of $x$ in $y$ is at position $10$ (see Fig.\ref{fig:example}).

\begin{figure}[t!]
\includegraphics[width=0.99\textwidth]{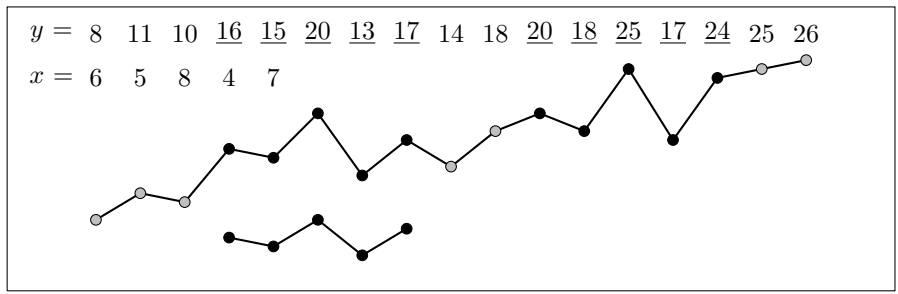}
\label{fig:example}
\caption{Example of a pattern $x$ of length $5$ over an integer alphabet with two order preserving occurrences in a text $y$ of length $17$, at positions $3$ and $10$.}
\end{figure}

The OPPM problem finds applications in the fields where we are interested in finding  patterns affected by relative orders, not by their absolute values. For example, it can be applied to time series analysis like share prices on stock markets, weather data or to musical melody matching of two musical scores.

In the last few years some solutions have been proposed for the order-preserving pattern matching problem. The first solution was presented by Kubica et al.~\cite{KKRRW13} in 2013. They proposed a $\bigO(n+m\log m)$ solution over generic ordered alphabets based on the Knuth-Morris-Pratt algorithm~\cite{KMP77} and a $\bigO(n+m)$ solution in the case of integer alphabets. Some months later Kim et al.~\cite{KEFHIPT14} presented a similar solution running in $\bigO(n+m\log m)$ time based on the KMP approach.  
Although Kim et al. stressed some doubts about the applicability of the Boyer-Moore approach~\cite{BM77} to order-preserving matching problem, in 2013 Cho et al.~\cite{CNPS13}  presented a method for deciding the order-isomorphism between two sequences showing that the Boyer-Moore approach  can be applied also to the order-preserving variant of the pattern matching problem. More recently Chhabra and Tarhio~\cite{CT14} presented a more practical solution based on approximate string matching. Their technique is based on a conversion of the input sequences in binary sequences and on the application of any standard algorithm for exact string matching as a filtration method.

In this paper we present two new families of filtering approaches which turn out to be much more effective in practice than the previously presented methods. While the technique proposed by Chhabra and Tarhio translates the input strings in binary sequences, our methods work on sequences over larger alphabets  in order to speed up the searching process and reduce the number of false positives. From our experimental results it turns out that our proposed solutions are up to 2 times faster than the previous solutions reducing the number of false positives up to 99\% under suitable conditions.

The paper is organized as follows. In Section \ref{sec:definitions} we give preliminary notions and definitions relative to the order-preserving pattern matching problem while in Section \ref{sec:previous} we briefly describe the previous solutions to the problem. Then we present our new solutions in Section \ref{sec:new} and evaluate their performances against the previous algorithms in Section \ref{sec:results}. Conclusions are drawn in Section \ref{sec:conclusions}.

\section{Notions and Basic Definitions} \label{sec:definitions}

A string $x$ over an ordered alphabet $\Sigma$, of size $\sigma$, is defined as a sequence of elements in $\Sigma$. We suppose that a total order relation ``$\leq$'' is defined on it, so that we could establish if $a\leq b$ for each $a,b \in \Sigma$.

We indicate the length of a string $x$ with the symbol $|x|$. We refer to the elements in $x$ with the symbol $x[i]$, for $0\leq i < |x|$. 
Moreover we indicate with $x[i \ldots j]$ the subsequence of $x$ from the element of position $i$ to the element of position $j$ (including the extremes), 
for $0\leq i\leq j < |x|$.

We say that two sequences $x, y \in \Sigma^*$ are order isomorphic if the relative order of their elements is the same. 
More formally we give the following definition.

\begin{definition}[order isomorphism]\label{def:isomorphism}
Given an ordered alphabet $\Sigma$ and two sequences $x,y\in \Sigma^*$ of the same length, we say that $x$ and $y$ are order-isomorphic, and write $x \eq y$, if the following conditions hold
\begin{enumerate}
	\item $|x| = |y|$
	\item $x[i] \leq x[j]$ if and only if $y[i] \leq y[j]$, for $0\leq i,j < |x|$
\end{enumerate} 
\end{definition}

\begin{definition}[rank function]\label{def:order}
Let $x$ be a sequence of length $m$ over an ordered alphabet $\Sigma$.
The rank function of $x$ if a mapping $r : \{0, 1, \ldots, m-1\} \rightarrow \{0,1, \ldots, m-1\}$ such that  $x[r(i)] \leq x[r(j)]$ holds for
each pair  $0\leq i<j<m$. Formally we define
$$
	r(i) = \left| \{ j\ :\ x[j]<x[i] \textrm{ or } (x[j]=x[i] \textrm{ and } j<i) \} \right|
$$
for $0\leq i< m$.
\end{definition}
We will refer to the value $r(i)$ as the \emph{rank} of $x[i]$ in $x$, while we will refer to the sequence $\langle r(0), r(1), \ldots r(m-1)\rangle$ as the \emph{relative order} of $x$.

According to Definition \ref{def:order} we have that  $x[r(0)]$ is the smallest number while $x[r(m-1)]$ is the greater number in $x$. 
If we assume that  $sort(x)$ is the time required to sort all the elements of $x$, then it is easy to observe that the relative order of $x$ can be computed in $\bigO(sort(x))$ time.

In addition, we define the \emph{equality function} of $x$ which indicates which elements of the sequence are equal (if any). More formally we have the following definition.

\begin{definition}[equality function]\label{def:equality}
Let $x$ be a sequence of length $m$ over an ordered alphabet $\Sigma$ and let $r$ be the rank function of $x$. 
The equality function of $x$ if a mapping $eq : \{0, 1, \ldots, m-2\} \rightarrow \{0,1\}$ such that, for each $0\leq i <m$
$$
	eq(i) = \left\{ \begin{array}{ll}
		1 ~~~~~~~~~~~~& \textrm{ if }~~ x[r(i)] = x[r(i+1)]\\
		0 & \textrm{otherwise}
	\end{array} \right.
$$
\end{definition}

Let $r$ be the rank function of a string $x$, such that $m=|x|$, and let $q$ be its equality function. It is easy to prove that $x$ and $y$ are order isomorphic if and only if they share the same rank and equality function, i.e. if and only if the following two conditions hold
\begin{enumerate}
	\item $y[r(i)] \leq y[r(i+1)]$, for $0\leq i< m-1$
	\item $y[r(i)] = y[r(i+1)]$ if and only if $q(i)=1$, for $0\leq i< m-1$
\end{enumerate}

\begin{example}
Let $x=\langle 6, 3, 8, 3, 10, 7, 10 \rangle$ and $y =\langle 2, 1, 4, 1, 5, 3, 5 \rangle$ two sequences of size $7$. We have that the relative order of $x$ is $( 1, 3, 0, 5, 2, 4, 6 )$ while its equality function is $eq(x[i]) = ( 1, 0, 0, 0, 0, 1)$. The two string are order isomorphic according to the definition given above, i.e. $x \eq y$.
\end{example}

\setcounter{instr}{0}
\begin{figure}[!t]
\begin{center}
\begin{tabular}{rl}
\multicolumn{2}{l}{\textsc{Noder-Isomorphism}$(r,eq,y,i)$}\\
\ninstr & \quad \textsf{for $i\leftarrow 0$ to $|x|-1$ do}\\
\ninstr & \quad \quad \textsf{if ($y[r(i)] > y[r(j+i+1)]$) then return false}\\
\ninstr & \quad \quad \textsf{if ($y[r(i)] < y[r(j+i+1)]$ and $eq(i)=1$) then return false}\\
\ninstr & \quad \quad \textsf{if ($y[r(i)] = y[r(j+i+1)]$ and $eq(i)=0$) then return false}\\
\ninstr & \quad \textsf{return true}\\
&\\
\end{tabular}
\caption{The function used to verify if two sequences $x$ and $y[i\ldots i+|x|-1]$ are order isomorphic. We assume that the function receives as input the parameter $r$ and $eq$
which represent the rank function and the equality function of $x$, respectively.}
\label{fig:code0}
\end{center}
\end{figure}

The procedure to verify that two numeric sequences, $x$ and $y$, are order isomorphic is shown in Fig.\ref{fig:code0}. It receives as input the functions $r$ and $q$, computed on $x$ and returns a boolean value indicating if $x\eq y$. The algorithm requires $\bigO(m)$ time, where $m$ is the length of the sequences.
A mismatch occurs when one of the three conditions of lines  2, 3 and 4, holds.

The OPPM problem consists in finding all substring of the text with the same relative order as the pattern. Specifically we have the following formal definition.

\begin{definition}[order preserving pattern matching]
Let $x$ and $y$ be two sequences of length $m$ and $n$, respectively (and $n>m$), both over an ordered alphabet $\Sigma$. The order preserving pattern matching problem consists in finding all indexes $i$, with $0\leq i <n-m$, such that $y[i\ldots i+m-1] \eq x$.
\end{definition}

If an occurrence of the pattern $x$ starts at portion $i$ of the text $y$, we say that $x$ has an order-preserving occurrence at position $i$.

\section{Previous Results} \label{sec:previous}
The OPPM problem has drawn particular attention in the last few years, during which some efficient results have been proposed.

The first algorithm to solve the OPPM problem was presented by Kubica \emph{et al.} in~\cite{KKRRW13}. Their solution was an adaptation of the well Known Knuth-Morris-Pratt algorithm for the exact string matching problem, where the fail function is adapted to compute the order-borders table. The authors proved that this table can be computed in linear time in the length of the pattern $x$, if the relative order of $x$ is known in advance. The overall time complexity of the algorithm is $\bigO(n+m\log m)$, where $m$ is the length of the pattern while $n$ is the length of the text. However in~\cite{CNPS13} Cho \emph{et al.} proved that the algorithm presented in~\cite{KKRRW13} can decide incorrectly when there are equal values in the string. 

The second algorithm based on Knuth-Morris-Pratt was presented later by Kim et al.~\cite{KEFHIPT14}. Their algorithm is based on the prefix representation and it is further optimized according to the nearest neighbor representation. The prefix representation is based on finding the rank of each integer in the prefix. It can be computed easily by inserting each character to the dynamic order statistic tree and then computing the rank of each character in the prefix. 
The time complexity of computing such prefix representation is $O(m\log m)$. The failure function is then computed as in the Knuth-Morris-Pratt algorithm in $O(m\log m)$ time. 
The overall time complexity of this algorithm is $O(n +m\log m)$. Again, this solution does not work properly when there are equal values in the pattern.


The first sublinear solution for the OPPM problem was presented by Cho \emph{et al.} in~\cite{CNPS13}. Their algorithm is an adaptation to OPPM of the well known Boyer-Moore approach.  They apply a $q$-grams technique, i.e. groups of $q$ consecutive characters are treated as a single condensed character, in order to make the shifts longer. 
In this way, a large amount of text can be skipped for long patterns.

More recently Chhabra and Tarhio presented a new practical solution~\cite{CT14} based on a filtration technique. Their algorithm translates the input sequences in two binary sequences  and then use any standard exact pattern matching algorithm as a filtration procedure. 
In particular in their approach a sequence $s$ is translated in a binary sequence $\beta$ of length $|s|-1$ according to the following position
\begin{equation}\label{eq:filter}
	\beta[i] = \left\{ \begin{array}{ll}
			1 & \textrm{ if } s[i] \geq s[i+1]\\
			0 & \textrm{ otherwise }
	\end{array}\right.
\end{equation}
for each $0\leq i < |s|-1$. This translation is unique for a given sequence $s$ and can be performed on line on the text, requiring constant time for each text character.

Thus when a candidate occurrence is found during the filtration phase an additional verification procedure is run in order to check for the order-isomorphism of the candidate substring and the pattern.  Despite its quadratic time complexity, this approach turns out to be simpler and more effective in practice than earlier solutions. It is important to notice that any algorithm for exact string matching can be used as a filtration method. 
The authors also proved that if the underlying filtration algorithm is sublinear and the text is translated on line, the overall complexity of the algorithm is sublinear on average.
Experimental results conducted in~\cite{CT14} show that the filter approach was considerably faster than the algorithm by Cho \emph{et al.}

For the sake of completeness we notice that Crochemore \emph{et al.} presented in~\cite{CIKKLPR13} a solution for the offline version of the OPPM problem based on a new data structure called order-preserving suffix tree. Their solution finds all occurrences of $x$ in $y$ in $\bigO((m \log n)/\log \log m + z)$ where $z$ is the number of occurrences of $x$  in $y$. In this paper we concentrate on the online version of the OPPM problem.

\section{New Efficient Filter Based Algorithms} \label{sec:new}
In this section we present two new general approaches for the OPPM problem. Both of them are based on a filtration technique, as in~\cite{CT14}, but we use information extracted from groups of integers in the input string, as in~\cite{CNPS13}, in order to make the filtration phase more effective in terms of efficiency and accuracy, as discussed below.

\emph{Text filtration} is a largely used technique  in the field of exact and approximate string matching. Specifically, instead of checking at each position of the text if the pattern occurs, it seems to be more efficient to filter text positions and check only when a substring looks like the pattern. When a resemblance has been detected a naive check of the occurrence is performed. In literature filtration techniques are generally  improved by using $q$-grams, i.e. groups of adjacent characters of the string which are considered as a single character of a condensed alphabet.

It is always convenient to use a filtration method which better and faster localize candidate occurrences, which imply accuracy and efficiency of the method, respectively.

The \emph{accuracy} of a filtration method is a value indicating how many false positives are detected during the filtration phase, i.e. the number of candidate occurrences detected by the filtration algorithm which are not real occurrences of the pattern. The \emph{efficiency} is instead related with the time complexity of the procedure we use for managing $q$-grams and with the time efficiency of the overall searching algorithm. It is clear that these two values are strongly related since a low accuracy implies an high number of false positives and, as a consequence, a decrease in the performance of the searching algorithm.

When using $q$-grams, a great accuracy translates in involving greater values of $q$. However, in this context, the value of $q$ represents a trade-off between the computational time required for computing the $q$-grams for each window of the text and the computational time needed for checking false positive candidate occurrences. The larger is the value of $q$, the more time is needed to compute each $q$-gram. On the other hand, the larger is the value of $q$, the smaller is the number of false positives the algorithm finds along the text during the filtration.

In our approaches we make use of the following definition of $q$-neighborhood of an element in an integer string.

\begin{definition}[$q$-neighborhood]
Given a string $x$ of length $m$, we define the $q$-neighborhood of the element $x[i]$, with $0\leq i < m-q$, as the sequence of $q+1$ elements from position $i$ to $i+q$ in $x$, i.e. the sequence $\langle x[i], x[i+1], \ldots, x[i+q] \rangle$.
\end{definition}

Both the filtration methods presented below translate the input sequence in a target numeric sequence which is used for the filtration. Specifically each position $i$ of the sequence is associated with a numeric value computed from the structure of the $q$-neighborhood of the element $x[i]$.


\subsection{The Neighborhood Ranking Approach} \label{sec:nr}
Given a string $x$ of length $m$, we can compute the relative position of the element  $x[i]$ compared with the element $x[j]$ by querying the inequality $x[i]\geq x[j]$. For brevity we will write in symbol $\gte_x(i,j)$ to indicate the boolean value resulting from the above inequality, extending the formal definition given in Equation (\ref{eq:filter}).
Formally we have
\begin{equation}\label{eq:gte}
	\beta_x(i,j) = \left\{ \begin{array}{ll}
			1 & \textrm{ if } x[i] \geq x[j]\\
			0 & \textrm{ otherwise }
	\end{array}\right.
\end{equation}

It is easy to observe that if $\gte_x(i,j)=1$ we have that $r(i)\geq r(j)$ ($x[j]$ precedes $x[i]$ in the ordering of the elements of $x$), otherwise $r(i)< r(j)$.

The neighborhood ranking (\nr) approach associates each position $i$ of the string $x$ (where $0\leq i < m-q$) with the sequence of the relative positions between $x[i]$ and $x[i+j]$, for $j=1,\ldots, q$. In other words we compute the binary sequence $\langle \gte_x(i,i+1), \gte_x(i,i+2), \ldots, \gte_x(i, i+q) \rangle$  of length $q$ indicating the relative positions of the element $x[i]$ compared with other values in its $q$-neighborhood. Of course, we do not include in the sequence the relative position of $\gte(i,i)$, since it doesn't give any additional information.

Since there are $2^{q}$ possible configurations of a binary sequence of length $q$ the string $x$ is converted in a sequence $\chi^q_x$ of length $m-q$, where each element $\chi^q_x[i]$, for $0\leq i<m-q$, is a value  such that $0\leq \chi^q_x[i]< 2^{q}$. 

More formally we have the following definition

\begin{definition}[$q$-NR sequence] \label{def:nr}
Given a string $x$ of length $m$ and an integer $q<m$, the $q$-\nr sequence associated with $x$ is a numeric sequence $\chi_x^q$ of length $m-q$ over the alphabet $\{0, \ldots, 2^q \}$ where
$$
	\chi^q_x[i] =  \sum_{j=1}^q \left( \gte_x(i,i+j)\times 2^{q-j} \right),  \textrm{ for all } 0\leq i < m-q
$$
\end{definition}

\begin{figure}[!t]
\begin{center}
\includegraphics[width=0.89\textwidth]{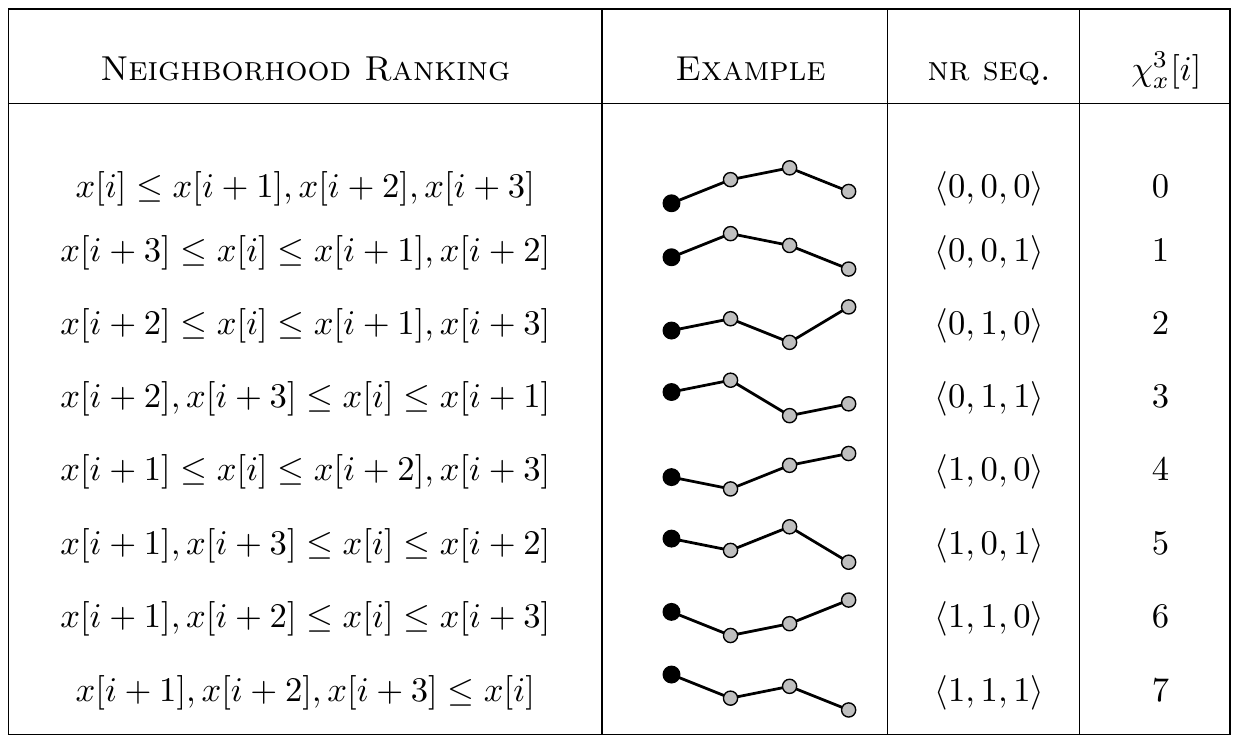}
\caption{The $2^3$ possible $3$-neighborhood ranking sequences associated with element $x[i]$, and their corresponding \nr value.
In the leftmost column we show the ranking position of $x[i]$ compared with other elements in its neighborhood $\langle x[i], x[i+1], x[i+2], x[i+3] \rangle$.}
\label{fig:seq1}
\end{center}
\end{figure}

\begin{example}  \label{ex:ranking}
Let $x = \langle 5, 6, 3, 8, 10, 7, 1, 9, 10, 8 \rangle$ be a sequence of length $10$. The $4$-neighborhood  of the element $x[2]$ is the subsequence $\langle 3, 8, 10, 7, 1\rangle$. Observe that $x[2]$ is greater than $x[6]$ and less than all other values in its $4$-neighborhood. Thus the ranking sequence associated with the element of position $2$ is $\langle 0, 0, 0, 1\rangle$ which translates in a \nr value equal to $1$. In a similar way we can observe that the \nr sequence associated with the element of position $3$ is $\langle 0, 1, 1, 0 \rangle$ which translates in a \nr value equal to $6$. The whole $4$-\nr sequence of length $6$ associated to $x$ is $\chi^4_x = \langle 4, 8, 1, 6, 15, 8\rangle$.
\end{example}

The following Lemma \ref{lem:nr} and Corollary \ref{cor:nr} prove that the \nr approach can be used to filter a text $y$ in order to search for all order preserving occurrences of a pattern $x$. In other words it proves that 
$$\{ i\ |\ x \eq y[i\ldots i+m-1]\} \subseteq \{ i\ |\ \chi^q_x = \chi^q_y[i \ldots i+m-k]\}.$$

\begin{lemma} \label{lem:nr}
Let $x$ and $y$ be two sequences of length $m$ and let $\chi^q_x$ and $\chi^q_y$ the $q$-ranking sequences associated to $x$ and $y$, respectively.
If $x \eq y$ then $\chi^q_x = \chi^q_y$.
\end{lemma}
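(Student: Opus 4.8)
The plan is to show that order isomorphism is preserved under the $\beta$-function that defines each entry of the $q$-\nr sequence, and then to lift this pointwise preservation to the whole sequence. The key observation is that $\chi^q_x[i]$ is completely determined by the binary values $\beta_x(i,i+j)$ for $j=1,\ldots,q$, and each such value only records whether $x[i]\geq x[i+j]$. Since order isomorphism is exactly the condition that all such pairwise inequalities agree between $x$ and $y$, the individual bits must coincide, and hence so must the weighted sums.

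First I would fix an arbitrary position $i$ with $0\leq i<m-q$ and argue that $\chi^q_x[i]=\chi^q_y[i]$. By Definition \ref{def:nr} it suffices to show that $\beta_x(i,i+j)=\beta_y(i,i+j)$ for every $j\in\{1,\ldots,q\}$, since the value is the fixed weighted sum $\sum_{j=1}^q \beta_x(i,i+j)\times 2^{q-j}$ and the weights $2^{q-j}$ do not depend on the string. To establish the equality of the bits, I would invoke the hypothesis $x\eq y$: by condition (2) of Definition \ref{def:isomorphism} we have $x[i]\leq x[i+j]$ if and only if $y[i]\leq y[i+j]$ for all indices in range. Because $\beta_x(i,i+j)=1$ exactly when $x[i]\geq x[i+j]$, a short case analysis (treating $x[i]<x[i+j]$, $x[i]=x[i+j]$, and $x[i]>x[i+j]$, using the biconditional in both directions to pin down the relation in $y$) shows $\beta_x(i,i+j)=1\iff\beta_y(i,i+j)=1$. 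Hence the two bit strings are identical and the entries agree.

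Having proved the pointwise claim, I would conclude by noting that $\chi^q_x$ and $\chi^q_y$ both have length $m-q$ (they are defined over strings of the common length $m$), so equality of every entry for $0\leq i<m-q$ yields $\chi^q_x=\chi^q_y$, as required.

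I do not expect a serious obstacle here, since the statement is essentially a direct unfolding of the definitions; the only point requiring a little care is the handling of the equality case $x[i]=x[i+j]$, where one must check that the convention ``$\geq$'' in Equation (\ref{eq:gte}) makes the $\beta$-bit robust under ties. The biconditional of Definition \ref{def:isomorphism} forces $y[i]=y[i+j]$ precisely when $x[i]=x[i+j]$, so both strings assign the same bit in that case, and the argument goes through uniformly.
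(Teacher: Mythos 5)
Your proof is correct and follows essentially the same route as the paper's: both reduce the claim to showing $\beta_x(i,i+j)=\beta_y(i,i+j)$ for all relevant $i,j$ via condition (2) of Definition \ref{def:isomorphism}, and then conclude entrywise equality of the $q$-\nr sequences from Definition \ref{def:nr}. The only cosmetic difference is that the paper first re-derives the pairwise biconditional through the rank function before arriving at the same bit-matching step, whereas you invoke the isomorphism condition directly; your handling of the tie case is a sound (if slightly more detailed) way of justifying the step the paper states without comment.
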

\begin{proof}
Let $r$ be the rank function associated to $x$ and suppose by hypothesis that $x \eq y$. Then the following statements hold
$$
\begin{array}{rlll}
	\emph{1}. &~&  \textrm{by Definition \ref{def:order} we have } x[r(i)] \leq x[r(i+1)],  \textrm{ for } 0\leq i < m-1;\\
	\emph{2}. && \textrm{by hyphotesis and Def.\ref{def:isomorphism}, } y[r(i)] \leq y[r(i+1)],  \textrm{ for } 0\leq i < m-1;\\
	\emph{3}. && \textrm{then by \emph{1} and \emph{2}, } x[i] \leq x[j]  \textrm{ iff } y[i] \leq y[j],  \textrm{ for } 0\leq i,j < m-1;\\	
	\emph{4}. && \textrm{the previous statement implies that } x[i] \geq x[i+j]  \textrm{ iff } y[i] \geq y[i+j]\\
			&& \textrm{for } 0\leq i < m-q  \textrm{ and }  1\leq j < q;\\	
	\emph{5}. && \textrm{by statement \emph{4}  we have that } \beta_x(i,i+j)  = \beta_y(i,j+j)\\
			&& \textrm{for } 0\leq i < m-q  \textrm{ and }  1\leq j < q;\\	
	\emph{6}. && \textrm{finally, by \emph{5} and Definition \ref{def:nr}, we have } \chi^q_x[i] = \chi^q_y[i],  \textrm{ for } 0\leq i < m-q.
\end{array} 
$$
This last statement proves the thesis.\qeda
\end{proof}

The following corollary prices that the \nr approach can be used as a filtering. It trivially follows from Lemma \ref{lem:nr}.
 
\begin{corollary} \label{cor:nr}
Let $x$ and $y$ be two sequences of length $m$ and $n$, respectively. Let $\chi^q_x$ and $\chi^q_y$ the $q$-ranking sequences associated to $x$ and $y$, respectively.
If $x \eq y[j \ldots j+m-1]$ then $\chi^q_x[i] = \chi^q_y[j+i]$, for $0\leq i < m-q$. \qeda
\end{corollary}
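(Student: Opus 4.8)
The plan is to reduce the corollary to Lemma~\ref{lem:nr} by isolating the relevant window of the text. First I would set $w = y[j \ldots j+m-1]$, so that $w$ is a sequence of length exactly $m$ and the hypothesis $x \eq y[j\ldots j+m-1]$ reads precisely as $x \eq w$. Since $x$ and $w$ now have the same length, Lemma~\ref{lem:nr} applies directly and yields $\chi^q_x = \chi^q_w$, that is, $\chi^q_x[i] = \chi^q_w[i]$ for every $0 \leq i < m-q$.

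The remaining step is to show that the $q$-\nr sequence of the extracted window $w$ agrees, entrywise, with the appropriately shifted portion of the $q$-\nr sequence of the full text $y$; concretely, that $\chi^q_w[i] = \chi^q_y[j+i]$ for $0 \leq i < m-q$. This is where the index bookkeeping must be done carefully, and I expect it to be the only real obstacle. The key observation is that, by Definition~\ref{def:nr}, the value $\chi^q_w[i]$ depends only on the $q$-neighborhood $\langle w[i], w[i+1], \ldots, w[i+q]\rangle$ of the element $w[i]$. By the definition of $w$ this neighborhood coincides symbol-for-symbol with $\langle y[j+i], y[j+i+1], \ldots, y[j+i+q]\rangle$, which is exactly the $q$-neighborhood of $y[j+i]$ in $y$. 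Consequently, for each $1 \leq k \leq q$ we have $w[i] \geq w[i+k]$ if and only if $y[j+i] \geq y[j+i+k]$, so that $\beta_w(i,i+k) = \beta_y(j+i, j+i+k)$; summing the contributions $\beta(\cdot)\times 2^{q-k}$ as prescribed by Definition~\ref{def:nr} then gives $\chi^q_w[i] = \chi^q_y[j+i]$. I would also note in passing that the index $j+i$ is always legitimate for $\chi^q_y$: from $i < m-q$ we get $j+i+q \leq j+m-1 < n$, so $j+i < n-q$.

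Chaining the two equalities produces $\chi^q_x[i] = \chi^q_w[i] = \chi^q_y[j+i]$ for all $0 \leq i < m-q$, which is the thesis. In essence the entire argument rests on the \emph{translation invariance} of the $q$-\nr value, namely that it is a purely local function of the $q$-neighborhood, so that first extracting the window $w$ and then shifting the index by $j$ leaves every entry unchanged; no separate treatment of equal values is required, since the $\beta$ comparisons respond to ties in the same way for $w$ as for $y$.
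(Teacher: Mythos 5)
Your proof is correct and follows the same route as the paper, which simply states that the corollary ``trivially follows from Lemma~\ref{lem:nr}''; you have merely made explicit the locality/shift bookkeeping (that $\chi^q_w[i]=\chi^q_y[j+i]$ because the $q$-\nr value depends only on the $q$-neighborhood) that the paper leaves implicit.
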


Fig. \ref{fig:code1} shows the procedure used for computing the \nr value associated with the element of the string $x$ at position $i$. The time complexity of the procedure is $\bigO(q)$. Thus, given a pattern $x$ of length $m$, a text $y$ of length $n$ and an integer value $q<m$, we can solve the OPPM problem by searching $\chi^q_y$ for all occurrences of $\chi^q_x$, using any algorithm for the exact string matching problem. During the preprocessing phase we compute the sequence $\chi_x^q$ and the functions $r_x $ and $q_x$. When an occurrence of $\chi^q_x$ is found at position $i$ the verification procedure \textsc{Noder-Isomorphism}$(r,q,y,i)$ (shown in Fig.\ref{fig:code0}) is run in order to check if $x\eq y[i\ldots i+m-1]$.

\setcounter{instr}{0}
\begin{figure}[!t]
\begin{center}
\begin{tabular}{rl}
\multicolumn{2}{l}{\textsc{Compute-NR-Value}$(x,i,q)$}\\
\ninstr & \quad \textsf{$\delta \leftarrow 0$}\\
\ninstr & \quad \textsf{for $j\leftarrow 1$ to $q$ do}\\
\ninstr & \quad \quad \textsf{$\delta = (\delta \ll 1) + \gte_x(i,i+j)$}\\
\ninstr & \quad \textsf{return $\delta$}\\
&\\
\end{tabular}
\caption{The function which computes the $q$-neighborhood ranking value of the element of position $i$ in a sequence $x$. The value id computed in $\bigO(q)$ time.}
\label{fig:code1}
\end{center}
\end{figure}

Since in the worst case the algorithm finds a candidate occurrence at each text position and each verification costs $\bigO(m)$, the worst case time complexity of the algorithm is $\bigO(nm)$, while the filtration phase can be performed with a $\bigO(nq)$ worst case time complexity.
However, following the same analysis of~\cite{CT14}, we easily prove that verification time approaches zero when the length of the pattern grows, so that the filtration time dominates. Thus if the filtration algorithm is sublinear, the total algorithm is sublinear.

\subsection{The Neighborhood Ordering Approach}  \label{sec:no}
The neighborhood ranking approach described in the previous section gives partial information about the relative ordering of the elements in the $q$-neighborhood of an element in $x$. The $q$ binary sequence used to represent each element $x[i]$ is not enough to describe the full ordering information of a set of $q+1$ elements.

The $q$-neighborhood ordering (\no) approach, which we describe in this section, associates each element of the $x$ with a binary sequence which completely describes the ordering disposition of the elements in the $q$-neighborhood of $x[i]$. The number of comparisons we need to order a sequence of $q+1$ elements is between $q$ (the best case) and $q(q+1)/2$ (the worst case). In this latter case it is enough to compare the element $x[j]$, where $i\leq j < i+q$, with each element $x[h]$, where $j<h\leq i+q$. 

Thus each element of position $i$ in $x$, with $0\leq i<m-q$, is associated with a binary sequence of length $q(q+1)/2$ which completely describes the relative order of the susequence $x[ i, \ldots, i+q]$. 
Since there are $(q+1)!$ possible permutations of a set of $q+1$ elements, the string $x$ is converted in a sequence $\varphi_x^q$ of length $m-q$, where each element $\varphi^q_x[i]$ is a value  such that $0\leq \varphi^q_x[i]< q(q+1)/2$.

More formally we have the following definition
\begin{definition}[$q$-NO sequence] \label{def:no}
Given a string $x$ of length $m$ and an integer $q<m$, the $q$-\no sequence associated with $x$ is a numeric sequence $\varphi_x^q$ of length $m-q$ over the alphabet $\{0, \ldots, q(q+1)/2 \}$ where
\begin{equation}
	\varphi^q_x[i] =  \sum_{k=1}^{q} \left( \chi^{k}_x[i+q-k] \times 2^{(k)(k-1)/2}\right),  \textrm{ for all } 0\leq i < m-q
\end{equation}
\end{definition}
Thus the $q$-\no value associated to $x[i]$ is the combination of $q$ different \nr sequences $\chi_x^q[i]$, $\chi_x^{q-1}[i+1]$, \ldots, $\chi_x^1[i+q-1]$. 

For instance the $4$-\no value associated to $x[i]$ is computed as
$$
	\varphi_x^4[i] = \chi_x^4[i] \times 2^{6} + \chi_x^{3}[i+1] \times 2^{2} + \chi_x^{2}[i+2] \times 2 + \chi_x^{1}[i+3]
$$

\begin{figure}[!t]
\begin{center}
\includegraphics[width=0.80\textwidth]{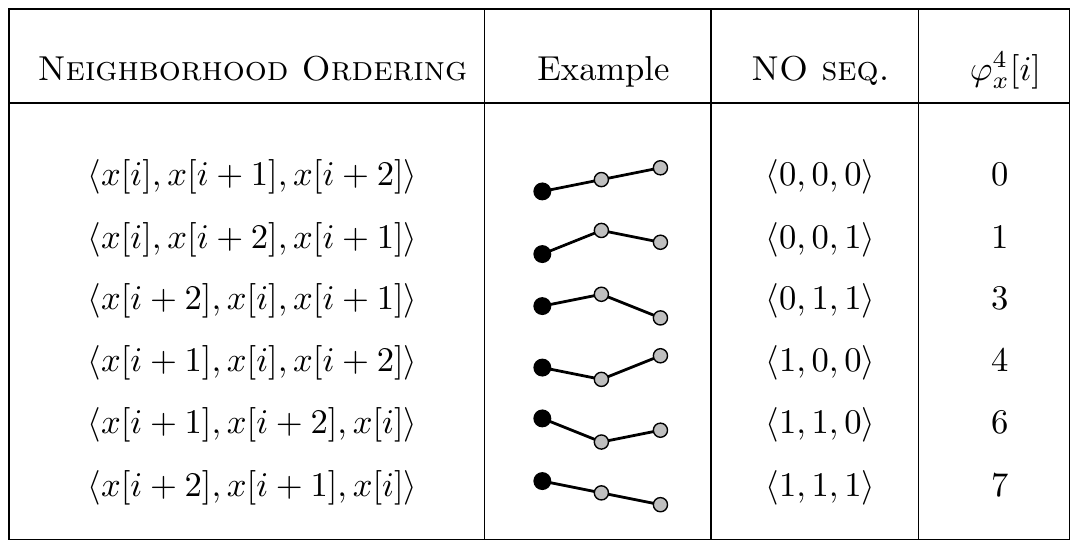}
\caption{The $3!$ possible ordering of the sequence $\langle x[i], x[i+1], x[i+2] \rangle$ and the corresponding binary sequence $\langle \beta_x(i,i+1),  \beta_x(i,i+2),  \beta_x(i+1,i+2)\rangle$.}
\label{fig:ex2}
\end{center}
\end{figure}

\setcounter{instr}{0}
\begin{figure}[!t]
\begin{center}
\begin{tabular}{rl}
\multicolumn{2}{l}{\textsc{Compute-NO-Value}$(x,i,q)$}\\
\ninstr & \quad \textsf{$\delta \leftarrow 0$}\\
\ninstr & \quad \textsf{for $k\leftarrow q$ downto $1$ do}\\
\ninstr & \quad \quad \textsf{for $j \leftarrow 1$ to $k$ do}\\
\ninstr & \quad \quad \quad \textsf{$\delta = (\delta \ll 1) + \beta_x(i+q-k, i+q-k+j)$}\\
\ninstr & \quad \textsf{return $\delta$}\\
&\\
\end{tabular}
\caption{The function which computes the $q$-neighborhood ranking value of the element of position $i$ in a sequence $x$. The value is computed in $\bigO(q^2)$ time.}
\label{fig:code2}
\end{center}
\end{figure}

\begin{example} \label{ex:ordering}
As in \emph{Example \ref{ex:ranking}}, let $x = \langle 5, 6, 3, 8, 10, 7, 1, 9, 10, 8 \rangle$ be a sequence of length $10$. The $3$-neighborhood  of the element $x[3]$ is the subsequence $\langle 8, 10, 7, 1\rangle$. The \no sequence of length $6$ associated with the element of position $2$ is therefore $\langle 0,1,1,1,1,1\rangle$ which translates in a \no value equal to $\varphi_x[3]=31$. In a similar way we can observe that the \nr sequence associated with the element of position $2$ is $\langle 0,0,0,0,1,1 \rangle$ which translates in a \no value equal to $\varphi^4_x[2]=3$. The whole sequence of length $7$ associated to $x$ is $\varphi^4_x = \langle 20, 32, 3, 31, 60, 32, 3\rangle$.
\end{example}

The following Lemma \ref{lem:no} and Corollary \ref{cor:no} prove that the \no approach can be used to filter a text $y$ in order to search for all order preserving occurrences of a pattern $x$. In other words they prove that $$\{ i\ |\ x \eq y[i\ldots i+m-1]\} \subseteq \{ i\ |\ \varphi^q_x = \varphi^q_y[i \ldots i+m-k]\}.$$

\begin{lemma} \label{lem:no}
Let $x$ and $y$ be two sequences of length $m$ and let $\varphi^q_x$ and $\varphi^q_y$ the $q$-ranking sequences associated to $x$ and $y$, respectively.
If $x \eq y$ then $\varphi^q_x = \varphi^q_y$.
\end{lemma}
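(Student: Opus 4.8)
The plan is to reduce Lemma \ref{lem:no} to Lemma \ref{lem:nr}, observing that by Definition \ref{def:no} every $q$-\no value is assembled purely from $q$-\nr values. Specifically, $\varphi^q_x[i]$ is a fixed integer combination of the values $\chi^k_x[i+q-k]$ for $k=1,\ldots,q$, with weights $2^{k(k-1)/2}$ that depend only on $k$ and not on the underlying sequence. Hence, if I can show that each constituent \nr value agrees between $x$ and $y$, the corresponding weighted sums must agree as well.

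First I would recall the core of the proof of Lemma \ref{lem:nr}: from the hypothesis $x \eq y$ and Definition \ref{def:isomorphism}, one obtains $x[a] \geq x[b]$ if and only if $y[a] \geq y[b]$ for every pair of valid indices $a,b$, and therefore $\beta_x(a,b) = \beta_y(a,b)$ for all $0 \leq a,b < m$ by Equation (\ref{eq:gte}). This pointwise equality of the $\beta$ values does not depend on any particular neighborhood width, so it can be exploited for every $k$ simultaneously rather than only for the top width $q$.

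Next I would plug this into Definition \ref{def:nr} applied with neighborhood width $k$: since $\chi^k_x[p] = \sum_{j=1}^{k} \beta_x(p,p+j)\, 2^{k-j}$ and the analogous formula holds for $y$, the equalities $\beta_x(p,p+j)=\beta_y(p,p+j)$ immediately give $\chi^k_x[p] = \chi^k_y[p]$ for every width $1\leq k\leq q$ and every valid position $p$. Taking $p = i+q-k$ for $0\leq i<m-q$, all positions used in Definition \ref{def:no} lie in the domain of $\chi^k$ (one checks that $i+q-k < m-k$ reduces to $i<m-q$, and $i+q-k\geq 0$ since $k\leq q$), so each summand $\chi^k_x[i+q-k]\,2^{k(k-1)/2}$ equals its counterpart for $y$. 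Summing over $k$ yields $\varphi^q_x[i] = \varphi^q_y[i]$ for all $0\leq i<m-q$, which is the thesis.

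The argument has no genuine obstacle beyond careful bookkeeping: the only delicate point is verifying that the shifted positions $i+q-k$ remain within the domain of each $\chi^k$, so that the conclusion of Lemma \ref{lem:nr} may be invoked legitimately at every width $k$ and not merely at $k=q$. Once that index check is settled, the claim follows by linearity from the width-wise equality of the \nr sequences.
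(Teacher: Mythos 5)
Your proposal is correct and is exactly the elaboration of the paper's own (one-line) proof, which derives the claim from Definition \ref{def:no} and Lemma \ref{lem:nr}; you merely fill in the details the paper leaves implicit, namely that the $\beta$-equality underlying Lemma \ref{lem:nr} holds for every width $k\leq q$ and that the shifted positions $i+q-k$ stay within the domain of each $\chi^k$. No substantive difference in approach.
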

\begin{proof}
The theorem easily follows from Definition \ref{def:no} and Lemma \ref{lem:nr}. \qeda
\end{proof}

The following corollary proves that the \nr approach can be used as a filtering. It trivially follows from Lemma \ref{lem:no}.
 
\begin{corollary} \label{cor:no}
Let $x$ and $y$ be two sequences of length $m$ and $n$, respectively. Let $\chi^q_x$ and $\chi^q_y$ the $q$-ranking sequences associated to $x$ and $y$, respectively.
If $x \eq y[j \ldots j+m-1]$ then $\chi^q_x[i] = \chi^q_y[j+i]$, for $0\leq i < m-q$. \qeda
\end{corollary}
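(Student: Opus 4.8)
The plan is to reduce the corollary to Lemma~\ref{lem:no} applied to a single length-$m$ window of the text, and then to exploit the purely local nature of the $q$-\no value to transfer the resulting equality from the window back to the full text sequence $y$. (Note that the statement is the \no analogue of Corollary~\ref{cor:nr}, so it concerns the sequences $\varphi^q$; it is the counterpart obtained by invoking Lemma~\ref{lem:no} in place of Lemma~\ref{lem:nr}.) First I would fix an index $j$ for which $x \eq y[j\ldots j+m-1]$ and introduce the abbreviation $w = y[j\ldots j+m-1]$, so that $w$ is a sequence of length exactly $m$. Since $x \eq w$ by hypothesis, Lemma~\ref{lem:no} applies directly and yields $\varphi^q_x = \varphi^q_w$, that is, $\varphi^q_x[i] = \varphi^q_w[i]$ for every $0 \le i < m-q$.

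The second step, where the only real content lies, is to show that $\varphi^q_w[i] = \varphi^q_y[j+i]$ for each such $i$. By Definition~\ref{def:no} the value $\varphi^q_w[i]$ is assembled from the quantities $\beta_w(i+q-k,\, i+q-k+\ell)$, each of which compares a pair of elements whose indices lie in the range $\{i, i+1, \ldots, i+q\}$; in other words $\varphi^q_w[i]$ depends only on the $q$-neighborhood $\langle w[i], w[i+1], \ldots, w[i+q]\rangle$. Because $w[t] = y[j+t]$ for every $0 \le t \le i+q$, this neighborhood coincides elementwise with $\langle y[j+i], y[j+i+1], \ldots, y[j+i+q]\rangle$, the $q$-neighborhood of $y[j+i]$. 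Hence every comparison $\beta_w(i+q-k,\, i+q-k+\ell)$ equals the corresponding comparison $\beta_y(j+i+q-k,\, j+i+q-k+\ell)$, and feeding these identical contributions through the summation of Definition~\ref{def:no} gives $\varphi^q_w[i] = \varphi^q_y[j+i]$.

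Finally I would chain the two equalities to obtain $\varphi^q_x[i] = \varphi^q_w[i] = \varphi^q_y[j+i]$ for all $0 \le i < m-q$, which is exactly the claim. I do not expect any genuine obstacle: the corollary is a translation-invariance statement, and the single point requiring care is the observation that the $q$-\no value is a \emph{local} function of a window of $q+1$ consecutive elements. Once this locality is made explicit, aligning the pattern at position $j$ of the text forces the window $w$ and the corresponding slice of $y$ to produce identical \no values index by index, and the passage from Lemma~\ref{lem:no} to the filtering corollary is immediate.
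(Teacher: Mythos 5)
Your proof is correct and takes essentially the same route as the paper, whose entire argument is the single assertion that the corollary ``trivially follows from Lemma~\ref{lem:no}''; your write-up simply makes explicit the locality step the paper leaves implicit, namely that $\varphi^q_w[i]$ depends only on the elements at positions $i,\ldots,i+q$ of the window, so the value computed on $w=y[j\ldots j+m-1]$ coincides with the one computed on $y$ at position $j+i$. You are also right to read the statement as being about $\varphi^q$ rather than the $\chi^q$ appearing in the printed text, which is evidently a copy-paste slip from Corollary~\ref{cor:nr}.
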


Fig. \ref{fig:code2} shows the procedure used for computing the \no value associated with the element of the string $x$ at position $i$. The time complexity of the procedure is $\bigO(q^2)$. Thus, given a pattern $x$ of length $m$, a text $y$ of length $n$ and an integer value $q<m$, we can solve the OPPM problem by searching $\varphi^q_y$ for all occurrences of $\varphi^q_x$, using any algorithm for the exact string matching problem. During the preprocessing phase we compute the sequence $\varphi_x^q$ and the functions $r_x $ and $q_x$. When an occurrence of $\varphi^q_x$ is found at position $i$ the verification procedure \textsc{Noder-Isomorphism}$(r,q,y,i)$ (shown in Fig.\ref{fig:code0}) is run in order to check if $x\eq y[i\ldots i+m-1]$.

Also in this case, if the filtration algorithm is sublinear on average, the \no approach has a sublinear behavior on average.

\section{Experimental Evaluations} \label{sec:results}
In this section we present experimental results in order to evaluate the performances of our new filter based algorithms presented in this paper.
In particular we tested our filter approaches against the filter approach of Chhabra and Tarhio~\cite{CT14}, which is, to the best of our knowledge, the most effective solution in practical cases.
In the experimental evaluation conducted in~\cite{CT14} the \textsc{sbndm$2$}  and \textsc{sbndm$4$} algorithms~\cite{DHPT10} turned out to be the most effective exact string matching algorithms which can be used in combination with the filter technique. Following the same line, in our experimental evaluation we use in all cases the \textsc{sbndm$2$} algorithm. However any other exact string matching algorithm could be used for this purpose.
In our dataset we use the following names to identify the tested algorithms
\begin{itemize}
	\item \textsc{Fct}: the \textsc{sbndm$2$} algorithm based on the filter approach by Chhabra and Jorma Tarhio presented in~\cite{CT14};
	\item \textsc{Nr$q$}: the \textsc{sbndm$2$} algorithm based on the Neighborhood Ranking approach presented in Section \ref{sec:nr}
	\item \textsc{No$q$}: the \textsc{sbndm$2$} algorithm based on the Neighborhood Ordering approach presented in Section \ref{sec:no}
\end{itemize}

We do not compare our solution with the Boyer-Moor approach by Cho \emph{et al.}~\cite{CNPS13} since it was shown to be less efficient than the algorithm by Chhabra and Tarhio in all cases.
We evaluated our filter based solutions in terms of efficiency, i.e. the running times, and accuracy, i.e. the percentage of false positives detected during the filtration phase. In particular for the \textsc{Fct} algorithm we will report the average running times, in milliseconds, and the average number of false positives detected every $2^{20}$ text characters.
Instead, for all other algorithms in the set, we will report the following two values 
\begin{itemize}
	\item the speed up of the running times obtained when compared with the time used by the \textsc{Fct} algorithm. If \emph{time}$(\textsc{Fct})$ is the running time of the \textsc{Fct} algorithm and $t$ is the running time of our algorithm, then the speed up is computed as \emph{time}$(\textsc{Fct})/t$.
	\item the percentage of the gain in the number of false positives detected by the algorithm when compared with the \textsc{Fct} algorithm. If \emph{fp}$(\textsc{Fct})$ is the number of false positives detected on average by the \textsc{Fct} algorithm and $fp$ is number of false positives detected by our filter approach, then the gain is computed as $(100 \times (\emph{fp}(\textsc{Fct})-fp)/\emph{fp}(\textsc{Fct})$.
\end{itemize}

We tested our solutions on sequences of short integer values (each element is an integer in the range $[0\ldots 256]$), long integer values (where each element is an integer in the range $[0\ldots 10.000]$) and floating point values (each element is a floating point in the range $[0.0 \ldots 10000.99]$). However we don't observe sensible differences in the results, thus in the following table we report for brevity the results obtained on short integer sequences. All texts have $1$ million of  elements. In particular we tested our algorithm on the following set of short integer sequences.
\begin{itemize}
\item \textsc{Rand-$\delta$}: a sequence of random integer values ringing around a fixed mean equal to $100$. Each value of the sequence is randomly chosen around the mean with a variability of $\delta$, so that the text can be seen as a random sequence of integers between $100-\delta$ and $100+\delta$ with a uniform distribution.
\item \textsc{Period-$\delta$}: a sequence of random integer values ringing around a periodic function with a period of $10$ elements. Each value of the sequence is randomly chosen around the function with a variability of $\delta$. All values of the sequences are always in the range $\{0\ldots 200+\delta\}$.
\end{itemize}

For each text in the set we randomly select $100$ patterns extracted from the text and compute the average running time over the $100$ runs. We also computed the average number of false positives detected by the algorithms during the search.
All the algorithms have been implemented using the \textsc{C} programming language and have been compiled on an MacBook Pro using the $\texttt{gcc}$ compiler Apple LLVM version 5.1 (based on LLVM 3.4svn) with $8$Gb Ram. During the compilation we use the \texttt{-O3} optimization option.

In the following table running times are expressed in milliseconds. Best results have been underlined.


\begin{table}[t]
\begin{scriptsize}
\begin{tabular*}{0.99\textwidth}{@{\extracolsep{\fill}}l|c|ccccc|ccc}
\hline
\hline
&&&&&&&&&\\
~$m$ ~~~~~&~ \textsc{Fct} ~&~ \textsc{Nr2} ~&~ \textsc{Nr3} ~&~ \textsc{Nr4} ~&~ \textsc{Nr5} ~&~ 
\textsc{Nr6} ~~~&~ \textsc{No2} ~&~ \textsc{No3} ~&~ \textsc{No4} ~\\[0.2cm]
\hline
 8  &  \cellcolor[gray]{0.9}\textsf{44.29}  &  \textsf{1.16}  &  \textsf{1.28}  &  \textsf{1.25}  &  \textsf{1.25}  &  \textsf{1.24}  & \best{\textsf{1.89}}  &  \textsf{1.71}  &  \textsf{1.11}  \\
 12  &  \cellcolor[gray]{0.9}\textsf{28.39}  &  \textsf{1.16}  &  \textsf{1.37}  &  \textsf{1.37}  &  \textsf{1.33}  &  \textsf{1.19}  &  \textsf{1.64}  & \best{\textsf{2.00}}  &  \textsf{1.64}  \\
 16  &  \cellcolor[gray]{0.9}\textsf{20.65}  &  \textsf{1.15}  &  \textsf{1.30}  &  \textsf{1.43}  &  \textsf{1.34}  &  \textsf{1.14}  &  \textsf{1.42}  & \best{\textsf{2.01}}  &  \textsf{1.83}  \\
 20  &  \cellcolor[gray]{0.9}\textsf{16.29}  &  \textsf{1.15}  &  \textsf{1.30}  &  \textsf{1.45}  &  \textsf{1.41}  &  \textsf{1.14}  &  \textsf{1.39}  & \best{\textsf{2.00}}  &  \textsf{1.93}  \\
 24  &  \cellcolor[gray]{0.9}\textsf{13.64}  &  \textsf{1.16}  &  \textsf{1.29}  &  \textsf{1.42}  &  \textsf{1.44}  &  \textsf{1.12}  &  \textsf{1.34}  &  \textsf{1.91}  & \best{\textsf{2.01}}  \\
 28  &  \cellcolor[gray]{0.9}\textsf{11.48}  &  \textsf{1.16}  &  \textsf{1.28}  &  \textsf{1.44}  &  \textsf{1.45}  &  \textsf{1.11}  &  \textsf{1.31}  &  \textsf{1.88}  & \best{\textsf{1.96}}  \\
 32  &  \cellcolor[gray]{0.9}\textsf{10.34}  &  \textsf{1.18}  &  \textsf{1.30}  &  \textsf{1.40}  &  \textsf{1.46}  &  \textsf{1.12}  &  \textsf{1.30}  &  \textsf{1.83}  & \best{\textsf{2.05}}  \\
\hline
\hline
 8  &  \cellcolor[gray]{0.9}\textsf{15713.46}  &  \textsf{84.1}  &  \textsf{92.4}  &  \textsf{95.1}  &  \textsf{94.0}  &  \textsf{90.2}  &  \textsf{97.5}  &  \textsf{99.1}  & \best{\textsf{99.6}}  \\
 12  &  \cellcolor[gray]{0.9}\textsf{1420.78}  &  \textsf{95.8}  &  \textsf{99.3}  &  \textsf{99.7}  &  \textsf{99.8}  &  \textsf{97.5}  &  \textsf{99.8}  &  \textsf{100.0}  & \best{\textsf{100.0}}  \\
 16  &  \cellcolor[gray]{0.9}\textsf{123.22}  &  \textsf{99.4}  & \best{\textsf{100.0}}  & \best{\textsf{100.0}}  & \best{\textsf{100.0}}  &  \textsf{99.7}  & \best{\textsf{100.0}}  & \best{\textsf{100.0}}  & \best{\textsf{100.0}}  \\
 20  &  \cellcolor[gray]{0.9}\textsf{12.07}  & \best{\textsf{100.0}}  & \best{\textsf{100.0}}  & \best{\textsf{100.0}}  & \best{\textsf{100.0}}  & \best{\textsf{100.0}}  & \best{\textsf{100.0}}  & \best{\textsf{100.0}}  & \best{\textsf{100.0}}  \\
 24  &  \cellcolor[gray]{0.9}\textsf{1.01}  & \best{\textsf{100.0}}  & \best{\textsf{100.0}}  & \best{\textsf{100.0}}  & \best{\textsf{100.0}}  & \best{\textsf{100.0}}  & \best{\textsf{100.0}}  & \best{\textsf{100.0}}  & \best{\textsf{100.0}}  \\
 28  &  \cellcolor[gray]{0.9}\textsf{0.02}  & \best{\textsf{100.0}}  & \best{\textsf{100.0}}  & \best{\textsf{100.0}}  & \best{\textsf{100.0}}  & \best{\textsf{100.0}}  & \best{\textsf{100.0}}  & \best{\textsf{100.0}}  & \best{\textsf{100.0}}  \\
 32  &  \cellcolor[gray]{0.9}\textsf{0.00}  & -  & -  & -  & -  & -  & -  & -  &- \\
 \hline
\hline
\end{tabular*}\\[0.2cm]
\end{scriptsize}
\caption{\label{tab:rand1}Experimental results on a \textsc{Rand-5} short integer sequence.}
\end{table}

\begin{table}[t]
\begin{scriptsize}
\begin{tabular*}{0.99\textwidth}{@{\extracolsep{\fill}}l|c|ccccc|ccc}
\hline
\hline
&&&&&&&&&\\
~$m$ ~~~~~&~ \textsc{Fct} ~&~ \textsc{Nr2} ~&~ \textsc{Nr3} ~&~ \textsc{Nr4} ~&~ \textsc{Nr5} ~&~ 
\textsc{Nr6} ~~~&~ \textsc{No2} ~&~ \textsc{No3} ~&~ \textsc{No4} ~\\[0.2cm]
\hline
 8  &  \cellcolor[gray]{0.9}\textsf{42.34}  &  \textsf{1.13}  &  \textsf{1.27}  &  \textsf{1.25}  &  \textsf{1.26}  &  \textsf{1.22}  & \best{\textsf{1.92}}  &  \textsf{1.68}  &  \textsf{1.08}  \\
 12  &  \cellcolor[gray]{0.9}\textsf{27.93}  &  \textsf{1.17}  &  \textsf{1.40}  &  \textsf{1.37}  &  \textsf{1.32}  &  \textsf{1.21}  &  \textsf{1.71}  & \best{\textsf{2.04}}  &  \textsf{1.63}  \\
 16  &  \cellcolor[gray]{0.9}\textsf{20.05}  &  \textsf{1.15}  &  \textsf{1.32}  &  \textsf{1.41}  &  \textsf{1.33}  &  \textsf{1.15}  &  \textsf{1.48}  & \best{\textsf{2.04}}  &  \textsf{1.81}  \\
 20  &  \cellcolor[gray]{0.9}\textsf{15.85}  &  \textsf{1.15}  &  \textsf{1.29}  &  \textsf{1.42}  &  \textsf{1.37}  &  \textsf{1.11}  &  \textsf{1.38}  & \best{\textsf{2.00}}  &  \textsf{1.90}  \\
 24  &  \cellcolor[gray]{0.9}\textsf{13.31}  &  \textsf{1.17}  &  \textsf{1.31}  &  \textsf{1.47}  &  \textsf{1.42}  &  \textsf{1.12}  &  \textsf{1.36}  &  \textsf{1.99}  & \best{\textsf{2.02}}  \\
 28  &  \cellcolor[gray]{0.9}\textsf{11.38}  &  \textsf{1.17}  &  \textsf{1.31}  &  \textsf{1.42}  &  \textsf{1.45}  &  \textsf{1.09}  &  \textsf{1.35}  &  \textsf{1.94}  & \best{\textsf{2.07}}  \\
 32  &  \cellcolor[gray]{0.9}\textsf{9.96}  &  \textsf{1.16}  &  \textsf{1.29}  &  \textsf{1.45}  &  \textsf{1.46}  &  \textsf{1.09}  &  \textsf{1.29}  &  \textsf{1.87}  & \best{\textsf{2.09}}  \\
\hline
\hline
 8  &  \cellcolor[gray]{0.9}\textsf{14326.78}  &  \textsf{83.6}  &  \textsf{92.3}  &  \textsf{95.6}  &  \textsf{92.9}  &  \textsf{90.2}  &  \textsf{97.7}  &  \textsf{99.3}  & \best{\textsf{99.7}}  \\
 12  &  \cellcolor[gray]{0.9}\textsf{1295.88}  &  \textsf{96.4}  &  \textsf{99.5}  &  \textsf{99.9}  &  \textsf{99.9}  &  \textsf{97.8}  &  \textsf{99.9}  &  \textsf{100.0}  & \best{\textsf{100.0}}  \\
 16  &  \cellcolor[gray]{0.9}\textsf{118.79}  &  \textsf{99.3}  & \best{\textsf{100.0}}  & \best{\textsf{100.0}}  & \best{\textsf{100.0}}  &  \textsf{99.7}  & \best{\textsf{100.0}}  & \best{\textsf{100.0}}  & \best{\textsf{100.0}}  \\
 20  &  \cellcolor[gray]{0.9}\textsf{10.43}  & \best{\textsf{100.0}}  & \best{\textsf{100.0}}  & \best{\textsf{100.0}}  & \best{\textsf{100.0}}  & \best{\textsf{100.0}}  & \best{\textsf{100.0}}  & \best{\textsf{100.0}}  & \best{\textsf{100.0}}  \\
 24  &  \cellcolor[gray]{0.9}\textsf{0.71}  & \best{\textsf{100.0}}  & \best{\textsf{100.0}}  & \best{\textsf{100.0}}  & \best{\textsf{100.0}}  & \best{\textsf{100.0}}  & \best{\textsf{100.0}}  & \best{\textsf{100.0}}  & \best{\textsf{100.0}}  \\
 28  &  \cellcolor[gray]{0.9}\textsf{0.00}  & -  & -  & -  & -  & -  & -  & -  &- \\
 32  &  \cellcolor[gray]{0.9}\textsf{0.00}  & -  & -  & -  & -  & -  & -  & -  &- \\
 \hline
\hline
\end{tabular*}\\[0.2cm]
\end{scriptsize}
\caption{\label{tab:rand2}Experimental results on a \textsc{Rand-20} short integer sequence.}
\end{table}

\begin{table}[t]
\begin{scriptsize}
\begin{tabular*}{0.99\textwidth}{@{\extracolsep{\fill}}l|c|ccccc|ccc}
\hline
\hline
&&&&&&&&&\\
~$m$ ~~~~~&~ \textsc{Fct} ~&~ \textsc{Nr2} ~&~ \textsc{Nr3} ~&~ \textsc{Nr4} ~&~ \textsc{Nr5} ~&~ 
\textsc{Nr6} ~~~&~ \textsc{No2} ~&~ \textsc{No3} ~&~ \textsc{No4} ~\\[0.2cm]
\hline
 8  &  \cellcolor[gray]{0.9}\textsf{42.62}  &  \textsf{1.16}  &  \textsf{1.28}  &  \textsf{1.28}  &  \textsf{1.25}  &  \textsf{1.25}  & \best{\textsf{1.94}}  &  \textsf{1.70}  &  \textsf{1.09}  \\
 12  &  \cellcolor[gray]{0.9}\textsf{28.35}  &  \textsf{1.19}  &  \textsf{1.41}  &  \textsf{1.39}  &  \textsf{1.36}  &  \textsf{1.21}  &  \textsf{1.75}  & \best{\textsf{2.06}}  &  \textsf{1.65}  \\
 16  &  \cellcolor[gray]{0.9}\textsf{20.37}  &  \textsf{1.18}  &  \textsf{1.32}  &  \textsf{1.44}  &  \textsf{1.37}  &  \textsf{1.17}  &  \textsf{1.49}  & \best{\textsf{2.09}}  &  \textsf{1.83}  \\
 20  &  \cellcolor[gray]{0.9}\textsf{16.12}  &  \textsf{1.15}  &  \textsf{1.29}  &  \textsf{1.46}  &  \textsf{1.39}  &  \textsf{1.12}  &  \textsf{1.39}  & \best{\textsf{2.04}}  &  \textsf{1.95}  \\
 24  &  \cellcolor[gray]{0.9}\textsf{13.35}  &  \textsf{1.18}  &  \textsf{1.30}  &  \textsf{1.46}  &  \textsf{1.44}  &  \textsf{1.13}  &  \textsf{1.36}  &  \textsf{1.97}  & \best{\textsf{1.99}}  \\
 28  &  \cellcolor[gray]{0.9}\textsf{11.60}  &  \textsf{1.18}  &  \textsf{1.32}  &  \textsf{1.47}  &  \textsf{1.50}  &  \textsf{1.14}  &  \textsf{1.37}  &  \textsf{1.96}  & \best{\textsf{2.06}}  \\
 32  &  \cellcolor[gray]{0.9}\textsf{10.06}  &  \textsf{1.16}  &  \textsf{1.29}  &  \textsf{1.45}  &  \textsf{1.48}  &  \textsf{1.10}  &  \textsf{1.33}  &  \textsf{1.89}  & \best{\textsf{2.07}}  \\
\hline
\hline
 8  &  \cellcolor[gray]{0.9}\textsf{15413.57}  &  \textsf{86.6}  &  \textsf{93.7}  &  \textsf{95.9}  &  \textsf{94.4}  &  \textsf{91.9}  &  \textsf{98.1}  &  \textsf{99.4}  & \best{\textsf{99.8}}  \\
 12  &  \cellcolor[gray]{0.9}\textsf{1492.39}  &  \textsf{97.0}  &  \textsf{99.6}  &  \textsf{99.9}  &  \textsf{99.9}  &  \textsf{98.1}  &  \textsf{99.9}  &  \textsf{100.0}  & \best{\textsf{100.0}}  \\
 16  &  \cellcolor[gray]{0.9}\textsf{114.82}  &  \textsf{99.3}  & \best{\textsf{100.0}}  & \best{\textsf{100.0}}  & \best{\textsf{100.0}}  &  \textsf{99.7}  & \best{\textsf{100.0}}  & \best{\textsf{100.0}}  & \best{\textsf{100.0}}  \\
 20  &  \cellcolor[gray]{0.9}\textsf{9.83}  & \best{\textsf{100.0}}  & \best{\textsf{100.0}}  & \best{\textsf{100.0}}  & \best{\textsf{100.0}}  & \best{\textsf{100.0}}  & \best{\textsf{100.0}}  & \best{\textsf{100.0}}  & \best{\textsf{100.0}}  \\
 24  &  \cellcolor[gray]{0.9}\textsf{0.83}  & \best{\textsf{100.0}}  & \best{\textsf{100.0}}  & \best{\textsf{100.0}}  & \best{\textsf{100.0}}  & \best{\textsf{100.0}}  & \best{\textsf{100.0}}  & \best{\textsf{100.0}}  & \best{\textsf{100.0}}  \\
 28  &  \cellcolor[gray]{0.9}\textsf{0.00}  & -  & -  & -  & -  & -  & -  & -  &- \\
 32  &  \cellcolor[gray]{0.9}\textsf{0.00}  & -  & -  & -  & -  & -  & -  & -  &- \\
 \hline
\hline
\end{tabular*}\\[0.2cm]
\end{scriptsize}
\caption{\label{tab:rand3}Experimental results on a \textsc{Rand-40} short integer sequence.}
\end{table}

\subsection*{Experimental Results on Random Sequences}
Experimental results on \textsc{Rand}-$\delta$ numeric sequences have been conducted with values of $\delta=5, 20, 40$ (see Table \ref{tab:rand1},  Table \ref{tab:rand2} and  Table \ref{tab:rand3}). The results show as the \textsc{No} approach is the best choice in all cases, achieving a speed up of $2.0$ if compared with the \textsc{Fct} algorithm. Also the \textsc{Nr} approach achieves always a good speed up which is between $1.15$ and $1.50$.
The gain in number of detected false positives is impressive and is in most cases between $90\%$ and $100\%$.
It is interesting to observe also that the value of $\delta$ do not affect the running times and the number of false positives detected during the search, which are very similar in the three tables.



\begin{table}[t]
\begin{scriptsize}
\begin{tabular*}{0.99\textwidth}{@{\extracolsep{\fill}}l|c|ccccc|ccc}
\hline
\hline
&&&&&&&&&\\
~$m$ ~~~~~&~ \textsc{Fct} ~&~ \textsc{Nr2} ~&~ \textsc{Nr3} ~&~ \textsc{Nr4} ~&~ \textsc{Nr5} ~&~ 
\textsc{Nr6} ~~~&~ \textsc{No2} ~&~ \textsc{No3} ~&~ \textsc{No4} ~\\[0.2cm]
\hline
 8  &  \cellcolor[gray]{0.9}\textsf{41.08}  &  \textsf{0.99}  & \best{\textsf{1.05}}  &  \textsf{0.88}  &  \textsf{0.79}  &  \textsf{0.90}  &  \textsf{0.88}  &  \textsf{0.73}  &  \textsf{0.60}  \\
 12  &  \cellcolor[gray]{0.9}\textsf{36.42}  & \best{\textsf{1.06}}  &  \textsf{1.02}  &  \textsf{0.94}  &  \textsf{0.86}  &  \textsf{0.91}  &  \textsf{0.81}  &  \textsf{0.67}  &  \textsf{0.69}  \\
 16  &  \cellcolor[gray]{0.9}\textsf{34.03}  & \best{\textsf{1.04}}  &  \textsf{0.86}  &  \textsf{0.78}  &  \textsf{0.74}  &  \textsf{1.00}  &  \textsf{0.77}  &  \textsf{0.64}  &  \textsf{0.60}  \\
 20  &  \cellcolor[gray]{0.9}\textsf{35.31}  & \best{\textsf{0.98}}  &  \textsf{0.89}  &  \textsf{0.88}  &  \textsf{0.84}  &  \textsf{0.92}  &  \textsf{0.73}  &  \textsf{0.60}  &  \textsf{0.55}  \\
 24  &  \cellcolor[gray]{0.9}\textsf{37.90}  & \best{\textsf{1.34}}  &  \textsf{1.33}  &  \textsf{1.30}  &  \textsf{1.18}  &  \textsf{1.15}  &  \textsf{0.99}  &  \textsf{0.82}  &  \textsf{0.76}  \\
 28  &  \cellcolor[gray]{0.9}\textsf{36.26}  & \best{\textsf{1.17}}  &  \textsf{1.09}  &  \textsf{1.10}  &  \textsf{1.04}  &  \textsf{0.97}  &  \textsf{0.78}  &  \textsf{0.64}  &  \textsf{0.56}  \\
 32  &  \cellcolor[gray]{0.9}\textsf{35.38}  &  \textsf{1.10}  & \best{\textsf{1.15}}  &  \textsf{1.05}  &  \textsf{0.95}  &  \textsf{0.94}  &  \textsf{0.82}  &  \textsf{0.65}  &  \textsf{0.59}  \\
\hline
\hline
 8  &  \cellcolor[gray]{0.9}\textsf{48697.90}  &  \textsf{78.1}  &  \textsf{78.2}  &  \textsf{75.0}  &  \textsf{61.8}  &  \textsf{83.0}  &  \textsf{89.1}  &  \textsf{94.2}  & \best{\textsf{95.8}}  \\
 12  &  \cellcolor[gray]{0.9}\textsf{45427.73}  &  \textsf{66.4}  &  \textsf{72.8}  &  \textsf{74.6}  &  \textsf{71.4}  &  \textsf{67.7}  &  \textsf{76.3}  &  \textsf{82.4}  & \best{\textsf{84.9}}  \\
 16  &  \cellcolor[gray]{0.9}\textsf{32091.18}  &  \textsf{54.1}  &  \textsf{63.6}  &  \textsf{66.0}  &  \textsf{63.9}  &  \textsf{55.3}  &  \textsf{66.3}  &  \textsf{72.7}  & \best{\textsf{74.5}}  \\
 20  &  \cellcolor[gray]{0.9}\textsf{26337.31}  &  \textsf{41.0}  &  \textsf{49.0}  &  \textsf{52.6}  &  \textsf{53.6}  &  \textsf{43.0}  &  \textsf{53.4}  &  \textsf{59.1}  & \best{\textsf{61.5}}  \\
 24  &  \cellcolor[gray]{0.9}\textsf{23100.22}  &  \textsf{42.3}  &  \textsf{56.9}  &  \textsf{61.6}  &  \textsf{62.5}  &  \textsf{44.0}  &  \textsf{60.5}  &  \textsf{66.7}  & \best{\textsf{69.6}}  \\
 28  &  \cellcolor[gray]{0.9}\textsf{23296.19}  &  \textsf{53.2}  &  \textsf{63.0}  &  \textsf{70.7}  &  \textsf{73.1}  &  \textsf{55.1}  &  \textsf{65.8}  &  \textsf{73.8}  & \best{\textsf{76.7}}  \\
 32  &  \cellcolor[gray]{0.9}\textsf{17959.33}  &  \textsf{49.7}  &  \textsf{66.6}  &  \textsf{72.0}  &  \textsf{75.6}  &  \textsf{50.5}  &  \textsf{68.9}  &  \textsf{75.2}  & \best{\textsf{79.4}}  \\
 \hline
\hline
\end{tabular*}\\[0.2cm]
\end{scriptsize}
\caption{\label{periodic1}Experimental results on a \textsc{Period-5} short integer sequence.}
\end{table}

\begin{table}[t]
\begin{scriptsize}
\begin{tabular*}{0.99\textwidth}{@{\extracolsep{\fill}}l|c|ccccc|ccc}
\hline
\hline
&&&&&&&&&\\
~$m$ ~~~~~&~ \textsc{Fct} ~&~ \textsc{Nr2} ~&~ \textsc{Nr3} ~&~ \textsc{Nr4} ~&~ \textsc{Nr5} ~&~ 
\textsc{Nr6} ~~~&~ \textsc{No2} ~&~ \textsc{No3} ~&~ \textsc{No4} ~\\[0.2cm]
\hline
 8  &  \cellcolor[gray]{0.9}\textsf{42.35}  &  \textsf{0.98}  & \best{\textsf{1.18}}  &  \textsf{0.91}  &  \textsf{0.81}  &  \textsf{0.89}  &  \textsf{1.02}  &  \textsf{0.83}  &  \textsf{0.68}  \\
 12  &  \cellcolor[gray]{0.9}\textsf{39.09}  &  \textsf{1.11}  & \best{\textsf{1.14}}  &  \textsf{1.06}  &  \textsf{0.98}  &  \textsf{1.00}  &  \textsf{1.02}  &  \textsf{0.88}  &  \textsf{0.93}  \\
 16  &  \cellcolor[gray]{0.9}\textsf{34.25}  & \best{\textsf{1.11}}  &  \textsf{1.01}  &  \textsf{1.02}  &  \textsf{1.01}  &  \textsf{1.08}  &  \textsf{0.96}  &  \textsf{0.87}  &  \textsf{0.87}  \\
 20  &  \cellcolor[gray]{0.9}\textsf{35.41}  &  \textsf{1.10}  &  \textsf{1.09}  &  \textsf{1.21}  & \best{\textsf{1.21}}  &  \textsf{1.07}  &  \textsf{0.97}  &  \textsf{0.89}  &  \textsf{0.89}  \\
 24  &  \cellcolor[gray]{0.9}\textsf{35.15}  &  \textsf{1.31}  &  \textsf{1.51}  & \best{\textsf{1.67}}  &  \textsf{1.60}  &  \textsf{1.14}  &  \textsf{1.15}  &  \textsf{1.10}  &  \textsf{1.18}  \\
 28  &  \cellcolor[gray]{0.9}\textsf{32.23}  &  \textsf{1.23}  &  \textsf{1.40}  & \best{\textsf{1.56}}  &  \textsf{1.36}  &  \textsf{1.07}  &  \textsf{1.04}  &  \textsf{1.08}  &  \textsf{1.15}  \\
 32  &  \cellcolor[gray]{0.9}\textsf{30.34}  &  \textsf{1.43}  & \best{\textsf{1.60}}  &  \textsf{1.53}  &  \textsf{1.43}  &  \textsf{1.22}  &  \textsf{1.19}  &  \textsf{1.11}  &  \textsf{1.07}  \\
\hline
\hline
 8  &  \cellcolor[gray]{0.9}\textsf{62122.44}  &  \textsf{56.9}  &  \textsf{77.8}  &  \textsf{71.5}  &  \textsf{57.1}  &  \textsf{60.9}  &  \textsf{84.7}  &  \textsf{91.8}  & \best{\textsf{95.9}}  \\
 12  &  \cellcolor[gray]{0.9}\textsf{50264.79}  &  \textsf{56.5}  &  \textsf{72.8}  &  \textsf{77.3}  &  \textsf{76.7}  &  \textsf{58.6}  &  \textsf{77.0}  &  \textsf{85.0}  & \best{\textsf{88.1}}  \\
 16  &  \cellcolor[gray]{0.9}\textsf{32026.85}  &  \textsf{60.0}  &  \textsf{73.8}  &  \textsf{79.4}  &  \textsf{80.5}  &  \textsf{62.4}  &  \textsf{78.8}  &  \textsf{86.3}  & \best{\textsf{89.2}}  \\
 20  &  \cellcolor[gray]{0.9}\textsf{23138.04}  &  \textsf{61.1}  &  \textsf{77.4}  &  \textsf{83.2}  &  \textsf{86.3}  &  \textsf{63.3}  &  \textsf{81.2}  &  \textsf{87.8}  & \best{\textsf{91.3}}  \\
 24  &  \cellcolor[gray]{0.9}\textsf{16535.75}  &  \textsf{65.1}  &  \textsf{82.8}  &  \textsf{88.6}  &  \textsf{91.0}  &  \textsf{68.0}  &  \textsf{85.3}  &  \textsf{91.3}  & \best{\textsf{94.2}}  \\
 28  &  \cellcolor[gray]{0.9}\textsf{12181.13}  &  \textsf{72.7}  &  \textsf{85.4}  &  \textsf{92.7}  &  \textsf{94.9}  &  \textsf{74.8}  &  \textsf{88.8}  &  \textsf{94.8}  & \best{\textsf{96.8}}  \\
 32  &  \cellcolor[gray]{0.9}\textsf{9276.84}  &  \textsf{75.2}  &  \textsf{90.4}  &  \textsf{94.2}  &  \textsf{97.0}  &  \textsf{76.1}  &  \textsf{91.4}  &  \textsf{95.4}  & \best{\textsf{98.0}}  \\
 \hline
\hline
\end{tabular*}\\[0.2cm]
\end{scriptsize}
\caption{\label{periodic2}Experimental results on a \textsc{Period-20} short integer sequence.}
\end{table}

\begin{table}[t]
\begin{scriptsize}
\begin{tabular*}{0.99\textwidth}{@{\extracolsep{\fill}}l|c|ccccc|ccc}
\hline
\hline
&&&&&&&&&\\
~$m$ ~~~~~&~ \textsc{Fct} ~&~ \textsc{Nr2} ~&~ \textsc{Nr3} ~&~ \textsc{Nr4} ~&~ \textsc{Nr5} ~&~ 
\textsc{Nr6} ~~~&~ \textsc{No2} ~&~ \textsc{No3} ~&~ \textsc{No4} ~\\[0.2cm]
\hline
 8  &  \cellcolor[gray]{0.9}\textsf{45.07}  &  \textsf{0.93}  & \best{\textsf{1.18}}  &  \textsf{0.94}  &  \textsf{0.81}  &  \textsf{0.89}  &  \textsf{1.12}  &  \textsf{0.91}  &  \textsf{0.78}  \\
 12  &  \cellcolor[gray]{0.9}\textsf{37.91}  &  \textsf{1.08}  &  \textsf{1.12}  &  \textsf{1.03}  &  \textsf{0.93}  &  \textsf{1.03}  & \best{\textsf{1.13}}  &  \textsf{1.03}  &  \textsf{1.08}  \\
 16  &  \cellcolor[gray]{0.9}\textsf{32.41}  &  \textsf{1.11}  &  \textsf{1.04}  &  \textsf{1.06}  & \best{\textsf{1.13}}  &  \textsf{1.07}  &  \textsf{1.07}  &  \textsf{1.02}  &  \textsf{1.10}  \\
 20  &  \cellcolor[gray]{0.9}\textsf{28.63}  &  \textsf{1.05}  &  \textsf{1.09}  &  \textsf{1.24}  & \best{\textsf{1.35}}  &  \textsf{1.08}  &  \textsf{1.04}  &  \textsf{1.04}  &  \textsf{1.15}  \\
 24  &  \cellcolor[gray]{0.9}\textsf{27.25}  &  \textsf{1.18}  &  \textsf{1.39}  & \best{\textsf{1.59}}  &  \textsf{1.53}  &  \textsf{1.10}  &  \textsf{1.12}  &  \textsf{1.14}  &  \textsf{1.40}  \\
 28  &  \cellcolor[gray]{0.9}\textsf{24.91}  &  \textsf{1.20}  &  \textsf{1.51}  & \best{\textsf{1.67}}  &  \textsf{1.41}  &  \textsf{1.05}  &  \textsf{1.17}  &  \textsf{1.30}  &  \textsf{1.50}  \\
 32  &  \cellcolor[gray]{0.9}\textsf{23.63}  &  \textsf{1.39}  & \best{\textsf{1.63}}  &  \textsf{1.55}  &  \textsf{1.31}  &  \textsf{1.20}  &  \textsf{1.27}  &  \textsf{1.41}  &  \textsf{1.41}  \\
\hline
\hline
 8  &  \cellcolor[gray]{0.9}\textsf{61386.36}  &  \textsf{50.0}  &  \textsf{73.3}  &  \textsf{67.7}  &  \textsf{50.7}  &  \textsf{56.3}  &  \textsf{81.3}  &  \textsf{89.0}  & \best{\textsf{94.9}}  \\
 12  &  \cellcolor[gray]{0.9}\textsf{36298.84}  &  \textsf{59.3}  &  \textsf{76.3}  &  \textsf{80.6}  &  \textsf{82.1}  &  \textsf{62.4}  &  \textsf{81.8}  &  \textsf{89.3}  & \best{\textsf{93.2}}  \\
 16  &  \cellcolor[gray]{0.9}\textsf{19385.18}  &  \textsf{70.4}  &  \textsf{84.0}  &  \textsf{88.8}  &  \textsf{90.8}  &  \textsf{72.8}  &  \textsf{88.7}  &  \textsf{94.2}  & \best{\textsf{96.5}}  \\
 20  &  \cellcolor[gray]{0.9}\textsf{10325.29}  &  \textsf{74.6}  &  \textsf{88.3}  &  \textsf{93.7}  &  \textsf{96.1}  &  \textsf{78.8}  &  \textsf{92.9}  &  \textsf{97.0}  & \best{\textsf{98.5}}  \\
 24  &  \cellcolor[gray]{0.9}\textsf{6566.03}  &  \textsf{82.4}  &  \textsf{94.9}  &  \textsf{97.5}  &  \textsf{98.7}  &  \textsf{84.9}  &  \textsf{96.1}  &  \textsf{98.4}  & \best{\textsf{99.4}}  \\
 28  &  \cellcolor[gray]{0.9}\textsf{3141.06}  &  \textsf{82.8}  &  \textsf{94.4}  &  \textsf{98.0}  &  \textsf{99.1}  &  \textsf{85.2}  &  \textsf{96.2}  &  \textsf{98.8}  & \best{\textsf{99.5}}  \\
 32  &  \cellcolor[gray]{0.9}\textsf{2399.46}  &  \textsf{88.3}  &  \textsf{97.1}  &  \textsf{99.1}  &  \textsf{99.7}  &  \textsf{89.6}  &  \textsf{97.8}  &  \textsf{99.3}  & \best{\textsf{99.8}}  \\
 \hline
\hline
\end{tabular*}\\[0.2cm]
\end{scriptsize}
\caption{\label{periodic3}Experimental results on a \textsc{Period-40} short integer sequence.}
\end{table}

\subsection*{Experimental Results on Periodic Sequences}
Experimental results on \textsc{Period-$\delta$} problem have been conducted on a periodic sequence with a period equal to $10$ and with $\delta=5$ (see Table \ref{periodic1}). The results show as the \textsc{Nr}$1$ approach is the best choice in most of the  cases, achieving a speed up of $1.3$ in suitable conditions. However in some cases the \textsc{Fct} algorithm turns out to be the best choice especially on short patterns. 
The \textsc{No} approach is always less efficient of the \textsc{Fct} algorithm although the gain in number of detected false positives is always between $65\%$ and $95\%$. This behavior is due to the high number of candidate occurrences detected by the algorithm, despite its gain in number of false positives, and to the relative effort in the construction of the filters values.

When the size of $\delta$ increases (see  Table \ref{periodic2} and  Table \ref{periodic3} ) the performances of the \textsc{No} approach get better achieving a speed up of $1.4$ in the best cases. However the \nr approach turns out to be always the best solutions with a speed up close to $1.7$ for long patterns.

The gain in number of false positives is always in the range between $50\%$ and $99.7\%$ for the \textsc{Nr} algorithm, and between $80\%$ and $99.8\%$ in the case of the \textsc{No} algorithm. The gain of the \textsc{No}$4$ algorithm is in most cases close the $100\%$.

\section{Conclusions} \label{sec:conclusions}
In this paper we discussed the Order Preserving Pattern Matching Problem and presented two new families of filtering approaches to solve such problem which turn out to be much more effective in practice than the previously presented methods. The presented methods translate the original sequence on new sequences over large alphabets  in order to speed up the searching process and reduce the number of false positives. 
From our experimental results it turns out that our proposed solutions are up to 2 times faster than the previous solutions reducing the number of false positives up to 99\% under suitable conditions.

\bibliographystyle{plain}

\end{document}